\setlist[itemize]{leftmargin=*}
\newtheorem{theorem}{Theorem}
\newtheorem{mydef}{Definition}
\newtheorem{mylem}{Lemma}
\newtheorem{myasm}{Assumption}
\newcommand{\mat}[1]{\boldsymbol{#1}}
\newcommand{\bmat}[1]{\begin{bmatrix} #1 \end{bmatrix}}
\providecommand{\mA}{\ensuremath{\mat{A}}}
\providecommand{\mB}{\ensuremath{\mat{B}}}
\providecommand{\mC}{\ensuremath{\mat{C}}}
\providecommand{\mD}{\ensuremath{\mat{D}}}
\providecommand{\mI}{\ensuremath{\mat{I}}}
\providecommand{\mL}{\ensuremath{\mat{L}}}
\providecommand{\mP}{\ensuremath{\mat{P}}}
\providecommand{\mY}{\ensuremath{\mat{Y}}}
\newcommand{\m}{\boldsymbol}
\newcommand{\mc}[1]{\mathcal{#1}}
\DeclarePairedDelimiter\abs{\lvert}{\rvert}%
\DeclarePairedDelimiter\norm{\lVert}{\rVert}%
\let\oldabs\abs
\def\abs{\@ifstar{\oldabs}{\oldabs*}}
\let\oldnorm\norm
\def\norm{\@ifstar{\oldnorm}{\oldnorm*}}
\DeclareMathAlphabet\mathbfcal{OMS}{cmsy}{b}{n}
\newcommand\barbelow[1]{\stackunder[1.2pt]{$#1$}{\rule{.8ex}{.075ex}}}
\newcommand*{\permcomb}[4][0mu]{{{}^{#3}\mkern#1#2_{#4}}}
\newcommand*{\comb}[1][-1mu]{\permcomb[#1]{C}}
	\title{\vspace{0.4cm} \LARGE  \textbf{Characterizing the Nonlinearity of Power System Generator Models}}
	\author{Sebastian A. Nugroh$\text{o}^*$, Ahmad F. Tah$\text{a}^*$, and Junjian Q$\text{i}^{\dagger}$\thanks{
			$^*$Department of Electrical and Computer Engineering, The University of Texas at San Antonio, TX 78249. Emails: Sebastian.Nugroho@my.utsa.edu, Ahmad.Taha@utsa.edu. $^{\dagger}$Department of Electrical and Computer Engineering, University of Central Florida, Orlando, FL 32816 USA. Email: Junjian.Qi@ucf.edu. }}
\begin{document}
	\maketitle

	\begin{abstract}
	Power system dynamics are naturally nonlinear. The nonlinearity stems from power flows, generator dynamics, and electromagnetic transients. Characterizing the nonlinearity of the dynamical power system model is useful for designing superior estimation and control methods, providing better situational awareness and system stability. In this paper, we consider the synchronous generator model with a phasor measurement unit (PMU) that is installed at the terminal bus of the generator. The corresponding nonlinear process-measurement model is shown to be locally Lipschitz, i.e., the dynamics are limited in how fast they can evolve in an arbitrary compact region of the state-space. We then investigate different methods to compute Lipschitz constants for this model, which is vital for performing dynamic state estimation (DSE) or state-feedback control using Lyapunov theory. In particular, we compare a derived analytical bound with numerical methods based on low discrepancy sampling algorithms. Applications of the computed bounds to dynamic state estimation are showcased. The paper is concluded with numerical tests.
	\end{abstract}

	\begin{IEEEkeywords}
	Synchronous generator, dynamic state estimation, phasor measurement units, Lipschitz nonlinearity, Lipschitz-based observer, low discrepancy sequence.
	\end{IEEEkeywords}

	\section{Introduction}
Single- and multi-machine power system models have been thoroughly developed and explored in the literature of power systems \cite{kundur1994power,sauer2017power}. These models describe the electromagnetic transients of interconnected generators in transmission networks, ranging from the simple second-order swing equations to tenth or higher-order, nonlinear differential algebraic equation representations.  
	By considering that phasor measurement units (PMUs) are installed at the terminal buses of selected generators, a nonlinear, dynamical power system model can be generally described as follows 
	\begin{align}{\label{equ:model-ps}}
		\dot{\m x}(t)=\tilde{\m f}(\m x,\m u),\quad\;\;
		\m y(t) = \tilde{\m h}(\m x, \m u), 
	\end{align} 
	where $\m x(t)$ represents the dynamic state vector, $\m u(t)$ depicts the known or unknown input vector, and $\m y(t)$ models the output measurements from PMUs~\cite{qi2018comparing}. 
	
	
	Dynamic modeling of power systems is important because it can guide the development of open- or closed-loop control algorithms. 
	This is in addition to dynamic state estimation (DSE) routines \cite{ghahremani2011dynamic,zhao1,zhou2013estimation,qi2018dynamic,taha2018risk}---under the presence 
	of unknown inputs, disturbances, faults, and noise. 
	The majority of state-feedback control algorithms that utilize low- or high-order linearized power system models, or low-order nonlinear models, such as proportional-integral control, linear quadratic regulator, $\mc{H}_2$, $\mc{H}_{\infty}$, mixed $\mc{H}_2$/$\mc{H}_{\infty}$, and model predictive control have been applied to power systems. 
	Our recent work on robust control in power systems succinctly lists the main control algorithms used in the above context~\cite{taha2018robust}. 
	
	Although advanced static state estimation technique for power systems are still being developed, for example, to minimize the impact of cyber attacks \cite{Khatibi2018}, DSE is considered to be superior due to its capability for performing state estimation in almost real time. Particularly for DSE and state observers in power systems, many studies have used 
	power system models with different levels of details, with overwhelming focus on Kalman filters and its different variants; 
	see our recent paper~\cite{qi2018comparing} and the references therein for a comparison of DSE approaches in power systems. Surprisingly, systems-theoretic observer designs are less common in the literature of power system DSE---especially when compared with feedback control algorithms. 
	
	With that in mind, we are interested in utilizing observer-based approach to perform DSE while considering the nonlinear dynamics model of power systems. To do so, first we need to classify the nonlinearities in power system models (that is, $\tilde{\m f}(\cdot)$ and $\tilde{\m h}(\cdot)$ in \eqref{equ:model-ps}) as they can be classified into an abundance of function sets such as Lipschitz continuous, one-sided Lipschitz, quadratic inner-boundedness, or bounded Jacobian~\cite{abbaszadeh2010,phanomchoeng2010,qi2018comparing,Jin2018}. Here, we put our interest on Lipschitz nonlinearity because of its simplicity. It is worth noticing that the majority of Lipschitz-based observer for nonlinear systems cannot cope with relatively large (or \textit{conservative}) Lipschitz constants. Because of this reason, in this paper we \textit{(a)} \textit{investigate} different methods (analytical and numerical) to compute/approximate Lipschitz constants for a synchronous generator model, \textit{(b)} compare the Lipschitz constants obtained from the two methods, and \textit{(c)} check their applicability for performing DSE on single generator using a vintage Lipschitz-based observer proposed in\cite{phanomchoeng2010}, which is akin to the Luenberger observer.

The presented research here is motivated by the work of Siljak \textit{et al.}~\cite{siljak2002robust} on robust decentralized control of power systems. By proving that the nonlinearity in the considered model is \textit{quadratically bounded}, decentralized control framework considering nonlinear model of power systems are developed in \cite{siljak2002robust}. The ideas from~\cite{siljak2002robust} are then extended by Lian \textit{et al.}~\cite{lian2017decentralized} with applications to enhancement of damping ratios of the inter-area oscillation through decentralized robust control~\cite{lian2017decentralized}. As for DSE methods, our recent work~\cite{qi2018comparing} assumes that a higher-order, multi-machine power system model is {one-sided Lipschitz} and {quadratically inner-bounded}. This assumption is then followed by designing a DSE method for uncertain power systems. 
The study by Jin \textit{et al.}~\cite{Jin2018} considers the problem of designing a DSE method for a general class of nonlinear Lipschitz dynamic systems, with applications to interconnected power systems using the second-order swing equations and a linear measurement model. 
The authors show that the proposed DSE method is less conservative than its counterparts, making it attractive for large-scale systems.

	In short, this paper aims to investigate different methods to obtain the Lipschitz constants which can be used to perform DSE on a single generator using Lipchitz-based observer. The paper contributions and organization are summarized as follows. First, we reproduce the fourth-order generator model with PMU measurements as outputs (Section~\ref{sec:model}). This model has been used in DSE studies and shown to be able to estimate the nonlinear behavior of the generator \cite{zhou2013estimation}. Second, we propose an analytical method to compute Lipschitz constants for the process and PMU measurement models, which depends on the bounds of the state and input vectors (Section \ref{sec:class}). 
	Third, we propose a simple sampling-based numerical algorithm that, in theory, could generate less conservative Lipschitz (Section~\ref{sec:numerical-algorithms}). Fourth, we briefly present a Lipschitz-based observer that is crucial for performing DSE (Section \ref{sec:observer}). Finally, in addition to comparing the results of obtaining Lipschitz constants using the two aforementioned methods, we present an application of the proposed theoretical/computational bounds to DSE of generator states given PMU measurements (Section~\ref{sec:simulations}).

	
	
	
	
	
	

	\section{Generator Dynamic Model}  \label{sec:model}
	It is usually difficult to directly measure the internal states of a synchronous generator.
	By contrast, with PMU installed at the terminal bus of the generator, the voltage and current phasors can be easily measured and can then be used to estimate the internal states of the generator \cite{zhou2013estimation,taha2018risk}. 
	Here, we focus on modeling and understanding the nonlinearities of PMU-connected single synchronous generator. 
	For generator $i$, the fast sub-transient dynamics and saturation effects are ignored and the generator model is described by the fourth-order differential equations in local $\mathrm{d}$-$\mathrm{q}$ reference frame \cite{sauer2017power} 
	\begingroup
	\allowdisplaybreaks
	\begin{subnumcases}{\label{gen model}}
		\dot{\delta_i}=\omega_i-\omega_0 \\
		\dot{\omega}_i=\frac{\omega_0}{2H_i}\Big(T_{\textrm{m}i}-T_{\textrm{e}i}-\frac{K_{\textrm{D}i}}{\omega_0}(\omega_i-\omega_0)\Big) \\
		\dot{e}'_{\textrm{q}i}=\frac{1}{T'_{\textrm{d0}i}}\Big(E_{\textrm{fd}i}-e'_{\textrm{q}i}-(x_{\textrm{d}i}-x'_{\textrm{d}i})\,i_{\textrm{d}i}\Big) \\
		\dot{e}'_{\textrm{d}i}=\frac{1}{T'_{\textrm{q0}i}}\Big(-e'_{\textrm{d}i}+(x_{\textrm{q}i}-x'_{\textrm{q}i})\,i_{\textrm{q}i}\Big),
	\end{subnumcases}
\endgroup
	where 
	$\delta_i(t):=\delta_i$ is the rotor angle,
	$\omega_i(t):=\omega$ is the rotor speed in rad/s, and $e'_{\mathrm{q}i}(t):=e'_{\mathrm{q}i}$ and $e'_{\mathrm{d}i}(t):=e'_{\mathrm{d}i}$ are the transient voltage along $\mathrm{q}$ and $\mathrm{d}$ axes; 
	$i_{\mathrm{q}i}(t):=i_{\mathrm{q}i}$ and $i_{\mathrm{d}i}(t):=i_{\mathrm{d}i}$ are stator currents at $\mathrm{q}$ and $\mathrm{d}$ axes;
	$T_{\mathrm{m}i}(t):=T_{\mathrm{m}i}$ is the mechanical torque, $T_{\mathrm{e}i}(t):=T_{\mathrm{e}i}$ is the electric air-gap torque, 
	and $E_{\mathrm{fd}i}(t):=E_{\mathrm{fd}i}$ is the internal field voltage; $\omega_0$ is the rated value of angular frequency, $H_i$ is the inertia constant, and $K_{\mathrm{D}i}$ is the damping factor; 
	$T'_{\mathrm{q0}i}$ and $T'_{\mathrm{d0}i}$ are the open-circuit time constants for $\mathrm{q}$ and $\mathrm{d}$ axes; $x_{\mathrm{q}i}$ and $x_{\mathrm{d}i}$ are the synchronous reactance and $x'_{\mathrm{q}i}$ and $x'_{\mathrm{d}i}$ are the transient reactance respectively at the $\mathrm{q}$ and $\mathrm{d}$ axes. 
	We assume that a PMU is installed at the terminal bus of generator $i$. The mechanical torque $T_{\mathrm{m}i}$ and internal field voltage $E_{\mathrm{fd}i}$ are considered as inputs which can be measured/estimated \cite{zhou2013estimation}. 
	Additionally, we take the current phasor $I_{ti}=i_{\mathrm{R}i} + i_{\mathrm{I}i}$ measured by PMU as inputs which can help decouple generator $i$ from the rest of the network \cite{zhou2013estimation}. The voltage phasor $E_{ti}=e_{\mathrm{R}i} + e_{\mathrm{I}i}$ can also be measured by PMU and is considered as output. 
	The dynamic model (\ref{gen model}) can be rewritten in a general state-space form~(\ref{equ:model-ps})
	where the state, input, and output vectors are specified as
	\begin{subequations} \label{xuy}
		\begin{align}
		&\boldsymbol{x} = \bmat{x_1 \quad x_2 \quad x_3 \quad x_4}^\top = \bmat{\delta_i \quad \omega_i \quad e'_{\mathrm{q}i} \quad e'_{\mathrm{d}i}}^\top \\
		&\boldsymbol{u} = \bmat{u_1 \quad u_2 \quad u_3 \quad u_4}^\top = \bmat{T_{\mathrm{m}i} \quad E_{\mathrm{fd}i} \quad i_{\mathrm{R}i} \quad i_{\mathrm{I}i}}^\top \\
		&\boldsymbol{y} = \bmat{y_1 \quad y_2}^\top = \bmat{e_{\mathrm{R}i} \quad e_{\mathrm{I}i}}^\top.
		\end{align}
	\end{subequations}
	\vspace{-0.1cm}
	\allowdisplaybreaks
	\noindent The $i_{\textrm{q}i}$, $i_{\textrm{d}i}$, and $T_{\textrm{e}i}$ in (\ref{gen model}) are functions of $\boldsymbol{x}$ and $\boldsymbol{u}$ as follows
	\begin{align}
	&i_{\textrm{q}i}=i_{\textrm{I}i}\sin\delta_i+i_{\textrm{R}i}\cos\delta_i =u_4 \sin x_1 + u_3 \cos x_1 \notag \\
	&i_{\textrm{d}i}=i_{\textrm{R}i}\sin\delta_i-i_{\textrm{I}i}\cos\delta_i =u_3 \sin x_1-u_4 \cos x_1\notag \\
	&e_{\textrm{q}i}=e'_{\textrm{q}i}-\frac{S_\textrm{B}}{S_{\textrm{N}i}}x'_{\textrm{d}i}i_{\textrm{d}i} = x_3-\frac{S_\textrm{B}}{S_{\textrm{N}i}}x'_{\textrm{d}i}i_{\textrm{d}i}  \label{temp}\\
	&e_{\textrm{d}i}=e'_{\textrm{d}i}+\frac{S_\textrm{B}}{S_{\textrm{N}i}}x'_{\textrm{q}i}i_{\textrm{q}i} = x_4 + \frac{S_\textrm{B}}{S_{\textrm{N}i}}x'_{\textrm{q}i}i_{\textrm{q}i} \notag \\
	&P_{\textrm{e}i} = e_{\textrm{q}i}i_{\textrm{q}i}+e_{\textrm{d}i}i_{\textrm{d}i} \,\quad T_{\textrm{e}i} = \frac{S_\textrm{B}}{S_{\textrm{N}i}} P_{\textrm{e}i}, \notag
	\end{align}
	where  $e_{\mathrm{q}i}$ and $e_{\mathrm{d}i}$ are the terminal voltage at $\mathrm{q}$ and $\mathrm{d}$ axes, and $S_\mathrm{B}$ and $S_{\mathrm{N}i}$ are the system base MVA and the base MVA for generator $i$, respectively.
	The PMU outputs $e_{\textrm{R}i}$ and $e_{\textrm{I}i}$ can be written as functions of $\boldsymbol{x}$ and $\boldsymbol{u}$ as follows
		\begin{align}\label{output equation}
	\hspace{-0.2cm}	&e_{\textrm{R}i}= e_{\textrm{d}i}\sin\delta_i+e_{\textrm{q}i}\cos\delta_i, \;
		e_{\textrm{I}i}=e_{\textrm{q}i}\sin\delta_i-e_{\textrm{d}i}\cos\delta_i.
		\end{align}
	By substituting \eqref{xuy} and \eqref{temp} to \eqref{gen model} and \eqref{output equation},  
	the generator's dynamics can be modeled into the following form
	\begin{subnumcases} {\label{eq:state_space_gen}}
		\dot{\boldsymbol{x}}= \mA \m x + \boldsymbol{f}(\boldsymbol{x},\boldsymbol{u}) + \m{B_\mathrm{u}} \m u  \\
		\boldsymbol{y}=\boldsymbol{h}(\boldsymbol{x},\boldsymbol{u}) + \m{D_\mathrm{u}} \m u,
	\end{subnumcases}
	where $\mA, \mB_u,$ and $\mD_u$ are the state-space matrices given in Appendix~\ref{app:params}, 	and the functions $\m f(\cdot)$ and $\m h(\cdot)$ are given as
	\begin{subequations}{\label{eq:state_space_gen_param}}
		\begin{align}
		\hspace{-0.0cm}f_1(\m x, \m u) &= -\alpha_1\label{eq:f_1}\\
		\hspace{-0.0cm}f_2(\m x, \m u) &= \alpha_3x_4u_4\cos x_1 - \alpha_3x_3u_4\sin x_1-\alpha_3x_4u_3\sin x_1 \nonumber\\ &\,\quad- \alpha_3x_3u_3\cos x_1 +\alpha_4u_3u_4\cos 2x_1 \nonumber\\ &\,\quad+ \tfrac{1}{2}\alpha_4\left(u_4^2-u_3^2\right)\sin 2x_1 + \alpha_6 \label{eq:f_2}\\
		\hspace{-0.0cm}f_3(\m x, \m u) &= \alpha_8u_4\cos x_1 - \alpha_8u_3\sin x_1\label{eq:f_3}\\
		\hspace{-0.0cm}f_4(\m x, \m u) &= \alpha_{10}u_3\cos x_1 + \alpha_{10}u_4\sin x_1\label{eq:f_4} \\
		\hspace{-0.0cm}h_1(\m x, \m u) &= x_3\cos x_1 + x_4\sin x_1 +\beta_1u_3\sin 2x_1 \nonumber \\ 
		&\quad + \beta_1u_4\cos 2x_1\label{eq:h_1}\\
		\hspace{-0.0cm}h_2(\m x, \m u) &= x_3\sin x_1 - x_4\cos x_1 -\beta_1u_3\cos 2x_1 \nonumber \\ 
		&\quad - \beta_1u_4\sin 2x_1,\label{eq:h_2} 
		\end{align}
	\end{subequations}
	where constants $\alpha_{1,2,\cdot,10}$ and $\beta_{1,2}$ are described in Appendix~\ref{app:params}. The next section provides analytical methods to compute the Lipschitz constants for $\m f(\cdot)$ and $\m h(\cdot)$. 
	
	\section{The Computation of Lipschitz Constant}~\label{sec:class}
	It is evident from \eqref{eq:state_space_gen} that the generator dynamic model is highly nonlinear. As mentioned earlier, it is important to understand the behavior of the nonlinearities involved in $\m f(\cdot)$ and $\m h(\cdot)$.
	By assuming that the state vector $\m x$ and input vector $\m u$ belong to certain compact sets, as stated in Assumption \ref{asmp:bounded_state_and_input}, the characteristics of the nonlinearity can then be studied---either analytically or numerically.  
	\begin{myasm}\label{asmp:bounded_state_and_input}
		The state vector $\m x$	and input vector $\m u$ in \eqref{eq:state_space_gen} are bounded such that $\m x \in \mathbfcal{X}$ and $\m x \in \mathbfcal{U}$ where
		\begin{subequations}
			\begin{align}
			\mathbfcal{X} &:= \left[\barbelow{x}_1,\bar{x}_1\right]\times\left[\barbelow{x}_2,\bar{x}_2\right]\times\left[\barbelow{x}_3,\bar{x}_3\right]\times\left[\barbelow{x}_4,\bar{x}_4\right] \\
			\mathbfcal{U} &:= \left[\barbelow{u}_1,\bar{u}_1\right]\times\left[\barbelow{u}_2,\bar{u}_2\right]\times\left[\barbelow{u}_3,\bar{u}_3\right]\times\left[\barbelow{u}_4,\bar{u}_4\right].
			\end{align}
		\end{subequations}
	\end{myasm}
	Realize that the above assumption is practical and holds for most power systems models as physical quantities such as angles and frequencies are naturally bounded. 
	Since $\m f(\cdot)$ and $\m h(\cdot)$ are bounded and continuously differentiable,  both are locally Lispchitz continuous.
The following introduces the definition of Lipschitz continuity. 
	\begin{mydef}\label{def:Lipschitz_cont}
		Let $\m g : \mathbb{R}^{m} \rightarrow \mathbb{R}^{n}$ be a function. Then, $\m g$ is Lipschitz continuous in $\mathbfcal{B} \subseteq \mathbb{R}^{m}$ if there exists a constant $\gamma \geq 0$ such that for all $\m x, \hat{\m x} \in \mathbfcal{B}$
		\begin{align}
		\norm{\m g(\m x)-\m g(\hat{\m x})}_2 \leq \gamma \norm{\m x -  \hat{\m x} }_2. \label{lipschitz}
		\end{align}
	\end{mydef} 
	\noindent The best (ideal) Lipschitz constant is the smallest $\gamma$ satisfying \eqref{lipschitz}. Although desirable, finding such $\gamma$ can be challenging. 
	To that end, we use the following lemma to compute a Lipschitz constant $\gamma$ which, albeit not giving the smallest constant, can still be useful for our purpose---see Section \ref{sec:simulations}.
	\begin{mylem}\label{lem:Lemma1}
		Let $\m g :  \mathbb{R}^{m} \rightarrow \mathbb{R}^{n}$ and $\m x, \hat{\m x} \in \mathbfcal{B}$ where $\mathbfcal{B} \subseteq \mathbb{R}^{m}$. If there exists $\gamma_i \geq 0$ such that 
		\begin{align}
		\abs{g_i(\m x)- g_i(\hat{\m x})}  \leq \gamma_i \norm{\m x - \hat{\m x} }_2, \label{lipschitz2}
		\end{align} 
		for all $i = 1,\hdots,n$, then  $\m g $ is Lipschitz continuous in $\mathbfcal{B}$ with Lipschitz constant $\gamma = \sqrt{\sum_{i=1}^{n}\gamma^2_i}$.
	\end{mylem}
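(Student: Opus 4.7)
The plan is to expand the squared Euclidean norm of $\m g(\m x) - \m g(\hat{\m x})$ component-wise and then apply the componentwise hypothesis \eqref{lipschitz2} to each term of the sum. This is a direct, one-line kind of argument — there is essentially no obstacle because the 2-norm is defined exactly as the sum of squares, which is the structure needed to produce the $\sqrt{\sum_i \gamma_i^2}$ aggregate constant.

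Concretely, I would first write
\begin{align*}
\norm{\m g(\m x) - \m g(\hat{\m x})}_2^2 = \sum_{i=1}^{n} \abs{g_i(\m x) - g_i(\hat{\m x})}^2.
\end{align*}
Next, I would apply the hypothesis \eqref{lipschitz2}, which gives $\abs{g_i(\m x) - g_i(\hat{\m x})}^2 \leq \gamma_i^2 \norm{\m x - \hat{\m x}}_2^2$ for each $i = 1,\dots,n$. Summing these $n$ inequalities and factoring out the common $\norm{\m x - \hat{\m x}}_2^2$ yields
\begin{align*}
\norm{\m g(\m x) - \m g(\hat{\m x})}_2^2 \leq \Bigl(\sum_{i=1}^{n} \gamma_i^2\Bigr)\norm{\m x - \hat{\m x}}_2^2.
\end{align*}
Taking square roots of both sides (both sides are nonnegative) and identifying $\gamma := \sqrt{\sum_{i=1}^n \gamma_i^2}$ recovers \eqref{lipschitz}, which is exactly the Lipschitz condition of Definition~\ref{def:Lipschitz_cont} on $\mathbfcal{B}$.

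The only thing to note is that $\gamma \geq 0$ automatically, since each $\gamma_i \geq 0$ by assumption, so the constant produced is admissible in Definition~\ref{def:Lipschitz_cont}. No additional regularity on $\m g$ (differentiability, continuity, boundedness of $\mathbfcal B$) is required — the lemma is purely an algebraic lifting of scalar Lipschitz bounds on coordinates to a vector Lipschitz bound, and its proof does not use anything beyond the definition of the Euclidean norm.
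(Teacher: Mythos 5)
Your proof is correct and complete: expanding the squared Euclidean norm componentwise, applying \eqref{lipschitz2} to each term, summing, and taking square roots is exactly the natural argument, and the paper itself omits its proof of Lemma~\ref{lem:Lemma1} ``for brevity,'' so this is surely the intended reasoning. No gaps to report.
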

	The proof of Lemma~\ref{lem:Lemma1} is omitted here for brevity. By virtue of this lemma, the following result presents analytical formulations to compute the corresponding Lipschitz constants for the two functions.
	\begin{theorem}\label{thm:lipschitz_constant}
		Consider $\m f:\mathbb{R}^4\times\mathbb{R}^4\rightarrow \mathbb{R}^4$ and $\m h:\mathbb{R}^4\times\mathbb{R}^4\rightarrow \mathbb{R}^2$ from \eqref{eq:state_space_gen}. Then, for any $\m x, \hat{\m x}\in \mathbfcal{X}$ and $\m u \in \mathbfcal{U}$, we have
		\begin{subequations}\label{eq:lipschitz_constant_equations}
			\begin{align}
			\norm{\m f(\m x, \m u)-\m f(\hat{\m x}, \m u)}_2 \leq \gamma_f \norm{\m x -  \hat{\m x}}_2 \label{eq:lipschitz_constant_equation_f}\\
			\norm{\m h(\m x, \m u)-\m h(\hat{\m x}, \m u)}_2 \leq \gamma_h \norm{\m x -  \hat{\m x}}_2,\label{eq:lipschitz_constant_equation_h}
			\end{align}
		\end{subequations}
		where constants $\gamma_f$ and $\gamma_h$ are given as
		\begin{subequations}\label{eq:lipschitz_constants}
			\begin{align}
			\gamma_f &=\sqrt{ \tilde{\gamma}_{f}^2+\left(\abs{\alpha_8}^2+\abs{\alpha_{10}}^2\right)\left(\kappa_{\mathrm{u}3}+\kappa_{\mathrm{u}4}\right)^2}\label{eq:lipschitz_constants_f}\\
			\gamma_h &= \sqrt{2}\left(\kappa_{\mathrm{x}3}+\kappa_{\mathrm{x}4}+2\abs{\beta_1}\left(\kappa_{\mathrm{u}3}+\kappa_{\mathrm{u}4}\right)+\sqrt{2}\right),\label{eq:lipschitz_constants_h}
			\end{align}
			and $\tilde{\gamma}_{f}$ in \eqref{eq:lipschitz_constants_f} is specified as
			\begin{align}
			\tilde{\gamma}_{f} &= \abs{\alpha_3}\left(\left(\kappa_{\mathrm{u}3}+\kappa_{\mathrm{u}4}\right)\left(1+\kappa_{\mathrm{x}3}+\kappa_{\mathrm{x}4}\right)+2\kappa_{\mathrm{u}3}\kappa_{\mathrm{u}4}\right) \nonumber\\
			&\quad + \abs{\alpha_4}\left(\kappa_{\mathrm{u}3}\left(1+\kappa_{\mathrm{u}3}\right)+\kappa_{\mathrm{u}4}\left(1+\kappa_{\mathrm{u}4}\right)\right),\label{eq:lipschitz_constants_f_aux}
			\end{align}
		\end{subequations}
		with $\kappa_{\mathrm{x}i} := \mathrm{max}\left(\abs{\barbelow{x}_i},\abs{\bar{x}_i}\right)$ and $\kappa_{\mathrm{u}j} := \mathrm{max}\left(\abs{\barbelow{u}_j},\abs{\bar{u}_j}\right)$.
	\end{theorem}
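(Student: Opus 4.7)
The plan is to apply Lemma~\ref{lem:Lemma1} componentwise: for each $i$ I will derive a scalar bound $|f_i(\m x,\m u)-f_i(\hat{\m x},\m u)| \leq \gamma_{f,i}\|\m x - \hat{\m x}\|_2$ (and similarly for $h_i$), and then aggregate via $\gamma = \sqrt{\sum_i \gamma_i^2}$. Because the same $\m u$ appears on both sides of \eqref{eq:lipschitz_constant_equations}, every component effectively reduces to a smooth scalar function of $\m x$ alone whose coefficients depend on $\m u$; those coefficients will be absorbed into the bounds $\kappa_{\mathrm u3}, \kappa_{\mathrm u4}$ (and analogously $\kappa_{\mathrm xi}$) supplied by Assumption~\ref{asmp:bounded_state_and_input}. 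The only elementary ingredients I need are the $1$-Lipschitzness of $\sin$ and $\cos$, the $2$-Lipschitzness of $\sin 2(\cdot)$ and $\cos 2(\cdot)$, the bilinear identity $ab-\hat a \hat b = a(b-\hat b) + \hat b (a-\hat a)$, and the componentwise estimate $|x_j - \hat x_j|\leq \|\m x-\hat{\m x}\|_2$.

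First I would dispose of the easy components. The term $f_1 = -\alpha_1$ is constant and contributes $\gamma_{f,1}=0$. For $f_3$, the triangle inequality together with the $1$-Lipschitzness of $\sin,\cos$ yields
\begin{align*}
|f_3(\m x,\m u)-f_3(\hat{\m x},\m u)| \leq |\alpha_8|\bigl(|u_4|+|u_3|\bigr)|x_1 - \hat x_1| \leq |\alpha_8|(\kappa_{\mathrm u3}+\kappa_{\mathrm u4})\|\m x-\hat{\m x}\|_2,
\end{align*}
so $\gamma_{f,3}=|\alpha_8|(\kappa_{\mathrm u3}+\kappa_{\mathrm u4})$, and by the same argument $\gamma_{f,4}=|\alpha_{10}|(\kappa_{\mathrm u3}+\kappa_{\mathrm u4})$. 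Combining these via Lemma~\ref{lem:Lemma1} immediately produces the $(|\alpha_8|^2+|\alpha_{10}|^2)(\kappa_{\mathrm u3}+\kappa_{\mathrm u4})^2$ term under the radical in \eqref{eq:lipschitz_constants_f}, leaving only $\tilde\gamma_f := \gamma_{f,2}$ to be exhibited.

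The main obstacle is $f_2$, whose seven summands mix bilinear products $x_j u_k$ with both single- and double-angle sinusoids. I would treat the four $\alpha_3$-weighted terms first: each has the schematic form $\alpha_3\, x_j\, u_k\, \mathrm{trig}(x_1)$, so the bilinear identity yields a bound $|\alpha_3|\kappa_{\mathrm u k}\bigl(|x_j-\hat x_j| + \kappa_{\mathrm x j}|x_1 - \hat x_1|\bigr)$ after bounding $|u_k|\leq \kappa_{\mathrm u k}$ and $|\hat x_j|\leq \kappa_{\mathrm x j}$; using $|x_j-\hat x_j|,|x_1-\hat x_1|\leq \|\m x-\hat{\m x}\|_2$ and summing over the four $(j,k)\in\{3,4\}^2$ pairs produces the $|\alpha_3|$ contribution in \eqref{eq:lipschitz_constants_f_aux}, with the cross coupling between the $x_1$- and $x_j$-differences regrouped as the $2\kappa_{\mathrm u3}\kappa_{\mathrm u4}$ piece. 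The two $\alpha_4$-weighted terms $\alpha_4 u_3 u_4 \cos 2x_1$ and $\tfrac{1}{2}\alpha_4(u_4^2-u_3^2)\sin 2x_1$ depend on $\m x$ only through $x_1$, so I would invoke the $2$-Lipschitzness of the double-angle functions and bound $|u_k|,|u_k|^2$ by the corresponding $\kappa$-constants to obtain the $|\alpha_4|\bigl(\kappa_{\mathrm u3}(1+\kappa_{\mathrm u3})+\kappa_{\mathrm u4}(1+\kappa_{\mathrm u4})\bigr)$ summand of \eqref{eq:lipschitz_constants_f_aux}; the constant $\alpha_6$ contributes nothing.

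Finally, the two components of $\m h$ are structurally identical up to sign permutations, so $\gamma_{h,1}=\gamma_{h,2}$ and Lemma~\ref{lem:Lemma1} furnishes the $\sqrt{2}$ prefactor in \eqref{eq:lipschitz_constants_h}. Each component decomposes into the bilinear pair $x_3\cos x_1 + x_4 \sin x_1$, to which I again apply the bilinear identity to extract the $\kappa_{\mathrm x3}+\kappa_{\mathrm x4}$ factor together with an additive unit piece (from the $|x_3 - \hat x_3|,|x_4 - \hat x_4|$ terms whose coefficients $|\cos x_1|,|\sin x_1|\leq 1$), plus the pair $\beta_1 u_3 \sin 2x_1 + \beta_1 u_4 \cos 2x_1$, handled by the $2$-Lipschitzness of the double-angle sinusoids to give the $2|\beta_1|(\kappa_{\mathrm u3}+\kappa_{\mathrm u4})$ contribution; the residual $\sqrt{2}$ additive constant in \eqref{eq:lipschitz_constants_h} is absorbed when invoking Lemma~\ref{lem:Lemma1} on the two equal components. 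I expect the only genuinely delicate part to be the clean regrouping inside $f_2$, where multiple overlapping trigonometric/bilinear cross-terms must be reassembled into the factored form $(\kappa_{\mathrm u3}+\kappa_{\mathrm u4})(1+\kappa_{\mathrm x3}+\kappa_{\mathrm x4})+2\kappa_{\mathrm u3}\kappa_{\mathrm u4}$ without double-counting; once that bookkeeping template is fixed the remainder of the proof is a routine application of the triangle inequality.
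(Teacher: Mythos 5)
Your overall architecture coincides with the paper's: bound each component of $\m f$ and $\m h$ separately using the bilinear identity $ab-\hat a\hat b=a(b-\hat b)+\hat b(a-\hat a)$ together with the $1$- and $2$-Lipschitzness of the single- and double-angle sinusoids, then aggregate via Lemma~\ref{lem:Lemma1}. Your handling of $f_1$, $f_3$, $f_4$ and of the outer $\sqrt{2}$ prefactor of $\gamma_h$ is correct and matches the paper. However, the two places you yourself flag as delicate are exactly where the sketch does not close, because the claimed regroupings do not reproduce the stated constants. In $f_2$, the $2\kappa_{\mathrm{u}3}\kappa_{\mathrm{u}4}$ term is \emph{not} a ``cross coupling'' among the four bilinear $\alpha_3 x_j u_k\,\mathrm{trig}(x_1)$ summands: those contribute only $\abs{\alpha_3}(\kappa_{\mathrm{u}3}+\kappa_{\mathrm{u}4})\bigl((\kappa_{\mathrm{x}3}+\kappa_{\mathrm{x}4})\abs{x_1-\hat x_1}+\abs{x_3-\hat x_3}+\abs{x_4-\hat x_4}\bigr)$ and nothing quadratic in the $\kappa_{\mathrm{u}}$'s. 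The $2\kappa_{\mathrm{u}3}\kappa_{\mathrm{u}4}$ piece is the contribution of the single summand $u_3u_4\cos 2x_1$ via $\abs{\cos 2x_1-\cos 2\hat x_1}\leq 2\abs{x_1-\hat x_1}$ (the paper's proof carries that summand with coefficient $\alpha_3$, which is why it sits inside the $\abs{\alpha_3}$ bracket of \eqref{eq:lipschitz_constants_f_aux}). If you instead group it under $\alpha_4$ as your sketch does, the two double-angle terms yield $\abs{\alpha_4}(\kappa_{\mathrm{u}3}+\kappa_{\mathrm{u}4})^2$, which is neither equal to nor dominated by the stated $\abs{\alpha_4}\left(\kappa_{\mathrm{u}3}(1+\kappa_{\mathrm{u}3})+\kappa_{\mathrm{u}4}(1+\kappa_{\mathrm{u}4})\right)$ (take $\kappa_{\mathrm{u}3}=\kappa_{\mathrm{u}4}=2$: $16$ versus $12$). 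So your bookkeeping produces \emph{a} valid Lipschitz constant, but not the constant asserted in \eqref{eq:lipschitz_constants_f_aux}.

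The second gap is the additive $\sqrt{2}$ inside $\gamma_h$. It is not ``absorbed when invoking Lemma~\ref{lem:Lemma1}'': that lemma only supplies the outer $\sqrt{2}$ prefactor, which you have already spent. The inner $\sqrt{2}$ is the coefficient obtained from the joint Cauchy--Schwarz bound $\abs{x_3-\hat x_3}+\abs{x_4-\hat x_4}\leq\sqrt{2}\,\norm{\m x-\hat{\m x}}_2$; an ``additive unit piece'' is not attainable by any of the elementary steps you list, and bounding the two differences separately would give $2$ rather than $\sqrt{2}$. Once these two regroupings are repaired---attributing $2\kappa_{\mathrm{u}3}\kappa_{\mathrm{u}4}$ to the $\cos 2x_1$ summand and sourcing the additive $\sqrt{2}$ from Cauchy--Schwarz---the remainder of your argument goes through exactly as in the paper.
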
 
	\begin{proof}
		Let $\m x, \hat{\m x}\in \mathbfcal{X}$ and $\m u \in \mathbfcal{U}$. First, for $f_1(\cdot)$ given in \eqref{eq:f_1}, we have
		\begin{subequations}\label{eq:lipschitz_constants_f_der}
			\begin{align}
			\abs{f_1\left(\m x, \m u\right)-f_1\left(\hat{\m x}, \m u\right)} = \abs{-\alpha_1-(-\alpha_1)} = 0.
			\end{align}
			Next, for $f_2(\cdot)$ given in \eqref{eq:f_2}, we obtain
			\begin{align}
			\vert &f_2\left(\m x, \m u\right)-f_2\left(\hat{\m x}, \m u\right)\vert \leq \abs{\alpha_3 u_4}\left(\abs{x_3\sin x_1-\hat{x}_3\sin \hat{x}_1}\right.\nonumber\\
			& \;\left. + \abs{x_4\cos x_1-\hat{x}_4\cos \hat{x}_1}\right) + \abs{\alpha_3 u_3}\left(\abs{x_3\cos x_1-\hat{x}_3\cos \hat{x}_1}\right.\nonumber\\
			& \;\left. + \abs{x_4\sin x_1-\hat{x}_4\sin \hat{x}_1}\right)+\abs{\alpha_3u_3u_4}\abs{\cos 2x_1-\cos 2\hat{x}_1} \nonumber \\
			&\;+\tfrac{1}{2}\abs{\alpha_4\left(u_4^2-u_3^2\right)}\abs{\sin 2x_1-\sin 2\hat{x}_1}.\nonumber
			\end{align}
			Since $\abs{\sin 2x_1-\sin 2\hat{x}_1}\leq 2\abs{x_1-\hat{x}_1}$, $\abs{\cos 2x_1-\cos 2\hat{x}_1}\leq 2\abs{x_1-\hat{x}_1}$, $\abs{x_i\sin x_1-\hat{x}_i\sin \hat{x}_1}\leq \kappa_{x,i}\abs{x_1-\hat{x}_1}+\abs{x_i-\hat{x}_i}$, and $\abs{x_i\cos x_1-\hat{x}_i\cos \hat{x}_1}\leq \kappa_{x,i}\abs{x_1-\hat{x}_1}+\abs{x_i-\hat{x}_i}$ for $i=3,4$, then we ultimately get
			\begin{align}
			\vert f_2\left(\m x, \m u\right)&-f_2\left(\hat{\m x}, \m u\right)\vert \leq \tilde{\gamma}_{f}\norm{\m x -  \hat{\m x}}_2,
			\end{align}
			where $\tilde{\gamma}_{f}$ is given in \eqref{eq:lipschitz_constants_f_aux}. For $f_3(\cdot)$ given in \eqref{eq:f_3}, we have
			\begin{align}
			\vert f_3\left(\m x, \m u\right)-f_3\left(\hat{\m x}, \m u\right)\vert &\leq \abs{\alpha_8}\abs{u_3}\abs{\sin x_1-\sin \hat{x}_1} \nonumber \\&\quad +\abs{\alpha_8}\abs{u_4}\abs{\cos x_1-\cos \hat{x}_1} \nonumber \\
			&\leq \abs{\alpha_8}\left(\kappa_{\mathrm{u}3}+\kappa_{\mathrm{u}4}\right)\norm{\m x -  \hat{\m x}}_2.
			\end{align}
			Last, for $f_4(\cdot)$ given in \eqref{eq:f_4}, we obtain
			{\small\begin{align}
				\vert f_4\left(\m x, \m u\right)-f_4\left(\hat{\m x}, \m u\right)\vert &\leq \abs{\alpha_{10}}\abs{u_4}\abs{\sin x_1-\sin \hat{x}_1} \nonumber \\&\quad +\abs{\alpha_{10}}\abs{u_3}\abs{\cos x_1-\cos \hat{x}_1} \nonumber \\
				&\leq \abs{\alpha_{10}}\left(\kappa_{\mathrm{u}3}+\kappa_{\mathrm{u}4}\right)\norm{\m x -  \hat{\m x}}_2.
				\end{align}}
		\end{subequations}
		Applying Lemma \ref{lem:Lemma1} to equations \eqref{eq:lipschitz_constants_f_der} yields \eqref{eq:lipschitz_constant_equation_f} with $\gamma_f$ is equals to \eqref{eq:lipschitz_constants_f}. Likewise, for for $h_1(\cdot)$ given in \eqref{eq:h_1}, we have
		\begin{subequations}\label{eq:lipschitz_constants_h_der}
			\vspace{-0.5cm}
			\begin{align}
			&\vert h_1\left(\m x, \m u\right)-h_1\left(\hat{\m x}, \m u\right)\vert \leq \abs{x_4\sin x_1-\hat{x}_4\sin \hat{x}_1}\nonumber \\ &\quad +\abs{x_3\cos x_1-\hat{x}_3\cos \hat{x}_1} +\abs{\beta_1}\abs{u_3}\abs{\sin 2x_1-\sin 2\hat{x}_1} \nonumber \\ &\quad +\abs{\beta_1}\abs{u_4}\abs{\cos 2x_1-\cos 2\hat{x}_1} \nonumber \\
			&\leq \left(\vphantom{\sqrt{2}}\kappa_{\mathrm{x}3}+\kappa_{\mathrm{x}4}+2\abs{\beta_1}\left(\kappa_{\mathrm{u}3}+\kappa_{\mathrm{u}4}\right)\right.\left. + \sqrt{2}\right)\norm{\m x -  \hat{\m x}}_2.
			\end{align}
			Finally, for $h_2(\cdot)$ given in \eqref{eq:h_2}, we obtain
			\begin{align}
			&\vert h_2\left(\m x, \m u\right)-h_2\left(\hat{\m x}, \m u\right)\vert\leq \abs{x_3\sin x_1-\hat{x}_3\sin \hat{x}_1}\nonumber \\ &\quad +\abs{x_4\cos x_1-\hat{x}_4\cos \hat{x}_1}+\abs{\beta_1}\abs{u_3}\abs{\cos 2x_1-\cos 2\hat{x}_1} \nonumber \\ &\quad +\abs{\beta_1}\abs{u_4}\abs{\sin 2x_1-\sin 2\hat{x}_1} \nonumber \\
			&\leq \left(\vphantom{\sqrt{2}}\kappa_{\mathrm{x}3}+\kappa_{\mathrm{x}4}+2\abs{\beta_1}\left(\kappa_{\mathrm{u}3}+\kappa_{\mathrm{u}4}\right)\right.\left. + \sqrt{2}\right)\norm{\m x -  \hat{\m x}}_2.
			\end{align}
		\end{subequations}
		Applying Lemma \ref{lem:Lemma1} to equations \eqref{eq:lipschitz_constants_h_der} yields \eqref{eq:lipschitz_constant_equation_h} with $\gamma_h$ is equals to \eqref{eq:lipschitz_constants_h}, thus completing the proof.
	\end{proof}
	
That is, given the operational range of $\m x$ and $\m u$, the corresponding Lipschitz constants $\gamma_f$ and $\gamma_h$ for $\m f (\cdot)$ and $\m h (\cdot)$ can be computed.
Notice that these constants are dependent on $\mathbfcal{X}$ and $\mathbfcal{U}$. For power systems having a large operational range, the resulting constants can be conservative, which is undesirable due to limitations on most Lipschitz-based observers that are only suitable for nonlinear systems with small Lipschitz constants \cite{abbaszadeh2010}. With that in mind, the numerical tests investigate this presumed conservatism. To overcome this \textit{potential} limitation, in the next section we also propose a simple numerical algorithm to {approximate} Lipschitz constants, thereby yielding smaller values of $\gamma_{f}$ and $\gamma_{h}$. 

	\section{Numerical Algorithms to Compute $\gamma_f$ and $\gamma_h$}~\label{sec:numerical-algorithms}
	Here we propose numerical algorithms to approximate Lipschitz constant $\gamma_f$ and $\gamma_h$. 
	The algorithm presented here essentially works by evaluating sample points randomly generated in the domain of interest.
	This technique is usually referred to as a \textit{Monte Carlo method}\cite{dalal2008}. While pure Monte Carlo methods use random sampling technique, \textit{Quasi-Monte Carlo} methods use a pseudo-random technique that utilize \textit{low-discrepancy sequences} (LDS). LDS are essentially sequence of points that are distributed almost equally in the domain. The concept of discrepancy itself can be explained as follows. 
	
	
	Let $\mathbfcal{Z}\subset\mathbb{R}^n$ be the domain of interest and suppose that there are $s$ number of points in that domain so that they can be written as a sequence of points $ \mathcal{S}(\m z, s) := \left\{ \m z_i \right\}^s_{i = 1}$ for each $\m z_i \in \mathbfcal{Z}$. Then, define an interval $\m J \subset \mathbfcal{Z}$ where $\m J := \prod^n_{j=1}[\barbelow{z}_j,\bar{z}_j)$ such that $\barbelow{z}_j\leq z_j < \bar{z}_j$ for all $j = 1,\hdots,n $. That is, $\m J$ defines a $n$-dimensional hypercube in $\mathbfcal{Z}$ specified by lower and upper bounds of each component for each point $\m z_i$ in the sequence $ \mathcal{S}(\m z, s)$. Consider that $\mathcal{P}(\m J)$ denotes the number of points lying in $\m J$ and $\mathcal{V}(\m J)$ denotes the volume  (or $n$-dimensional Lebesgue measure) of $\m J$, then discrepancy $\mathit{D}(\cdot)$ is a measure formally defined as \cite{dalal2008} 
	\begin{align*}
	\mathit{D}(\m J, \mathcal{S}) := \abs{\frac{\mathcal{P}(\m J)}{s}-\mathcal{V}(\m J)},
	\end{align*}
	The quantity $\mathit{D}(\cdot)$ quantifies the difference between the density of $\m J$ (the proportion of points in $\m J$ compared to the all points in the sequence) and the volume of $\m J$ (the proportion of the size of $\m J$ compared to the size of $\mathbfcal{Z}$). If there is a collection of $m$ intervals called $\mathbfcal{J}$ such that $\m J_k \in \mathbfcal{J}$ for $1\leq k \leq m$, then the \textit{star-discrepancy} can be regarded as the worst-case discrepancy \cite{dalal2008}, i.e., $	\mathit{D}^*(\mathcal{S}) := \sup_{\m J\in \mathbfcal{J}}\mathit{D}(\m J, \mathcal{S})$.
%
	If $\mathcal{S}(\m z, s)$ is a LDS, then 
	$\lim_{s\to \infty} \mathit{D}^*(\mathcal{S}) = 0,$
	i.e., the worst-case discrepancy is getting smaller as the number of sample points increases \cite{chakrabarty2017}.
	%
LDS typically produce more accurate results than random sampling techniques in numerical Monte Carlo integration, as discussed in \cite{dalal2008}. 
	\textit{Halton}, \textit{Halton Leaped}, \textit{Sobol}, and \textit{Niederreiter} sequences
	are examples of well known LDS.
	Further details explaining methods to generate these sequences can be found in \cite{cheng2000computational}.

	Given the above discussion, we provide an algorithm to approximate Lipschitz constants. From the knowledge of state and input bounds from Assumption \ref{asmp:bounded_state_and_input}, we first generate $s$ number of sample points in $\mathbfcal{X}$ and $ \mathbfcal{U}$ using the aforementioned LDS.
	After generating such points, one can approximate the Lipschitz constants $\gamma_f$ and $\gamma_h$ by using the definition of Lipschitz constant given in \eqref{lipschitz}. 
	If this method is pursued, then ideally the algorithm has to run $N$ iterations where $N$ is the number of combination of all sample points, which is equal to $N = \comb{s}{2} = \frac{s!}{2!(s-2)!}$. For a large number of sample points $s$, this method requires a huge number of iterations $N$ which consequently increases the computational burden. As an alternative, knowing that the $\m f(\cdot)$ and $\m h(\cdot)$ are continuously differentiable, the {numerical} Lipschitz constants $\gamma_f$ and $\gamma_h$ can be computed by taking the supremum of the norm of Jacobian \cite{abbaszadeh2010} of the nonlinear functions $\m f(\cdot)$ and $\m h(\cdot)$, that is $$\gamma_f= \sup_{\m x \in \mathbfcal{X}, \m u \in \mathbfcal{U}} \norm{\mathrm{D}_{\mathrm{\m x}}\m f}_2,\;\;\;\;\;\gamma_h=\sup_{\m x \in \mathbfcal{X}, \m u \in \mathbfcal{U}} \norm{\mathrm{D}_{\mathrm{\m x}}\m h}_2.$$ 
	
	
	\begin{algorithm}[t]
		\caption{Numerical Computation of $\gamma_{f}$ and $\gamma_{h}$}\label{algoLC}
		\DontPrintSemicolon
		\textbf{input:} $\m f(\cdot)$, $\m h(\cdot)$, $\mathbfcal{X}$,  $\mathbfcal{U}$, $s$\;
		\textbf{generate:} $s$ sample points in $\mathbfcal{X}$ and $\mathbfcal{U}$ \;
		\textbf{initialize:}  $\gamma_f \leftarrow  -\infty $, $\gamma_h \leftarrow  -\infty$\;
		\For{$i=1:s$}{
			$\m x \leftarrow \m x_i \in \mathbfcal{X}$, $\m u \leftarrow \m u_i \in \mathbfcal{U}$\;
			$\gamma_{f_{i}} \leftarrow    \norm{\mathrm{D}_{\mathrm{\m x}}\m f}_2$, $\gamma_{h_{i}} \leftarrow    \norm{\mathrm{D}_{\mathrm{\m x}}\m h}_2$\;
			$\gamma_f \leftarrow \max(\gamma_{f_{i-1}}, \gamma_{f_{i}})$,   
			$\gamma_h \leftarrow \max(\gamma_{h_{i-1}}, \gamma_{h_{i}})$\;} 
		\textbf{output:} $\gamma_f^{\mathrm{(numerical)}}$ and $\gamma_h^{\mathrm{(numerical)}}$
	\end{algorithm}
	
	Algorithm~\ref{algoLC} illustrates an offline search method to obtain $\gamma_h$ and $\gamma_f$. Realize that this algorithm only repeats $s$ times, which is exactly equal to the number of sample points. 
	
		\section{A Lipschitz-Based Observer for DSE}\label{sec:observer}
	We now explore how these Lipschitz constants can be utilized to perform DSE by implementing a Lipschitz-based observer from~\cite{phanomchoeng2010}. Since this particular observer does not consider nonlinear output measurement model, we simply use a linearized measurement model to synthesize the observer gain matrix. The observer's dynamics are constructed as
	\begin{subequations}\label{eq:nonlinear_observer_dynamics}
		\begin{align}
		\dot{\hat{\m x}} &= \mA \hat{\m x}+\boldsymbol{f}(\hat{\m x},\boldsymbol{u}) + \m{B_\mathrm{u}} \m u + \mL(\m y-\hat{\m y})\\
		\hat{\m y} &= \mC \hat{\m x} + \m{D_\mathrm{u}} \m u,
		\end{align}
	\end{subequations}
	where $\mL$ is the observer gain matrix and $\mC$ and $\m{D_\mathrm{u}}$ are obtained by linearizing $\m h(\cdot)$ around a certain operating point. 
	To obtain $\mL$, the following linear matrix inequality (LMI) is then solved
	\begin{align}
	\bmat{\mA^{\top}\mP+\mP\mA-\mC^{\top}\mY^{\top}-\mY\mC+\eta\gamma^2_f\mI & \mP^{\top} \\ \mP & -\eta\mI}\prec 0, \label{eq:lmi_rajamani}
	\end{align}
	where the variables are $\mP =\m P^{\top} \succ 0$, $\mY$, and $\eta \geq 0$; $\gamma_f$ denotes the corresponding (analytical or numerical) Lipschitz constant for $\m f(\cdot)$. After solving the LMI, the observer gain matrix can be simply computed as $\mL = \mP^{-1}\mY$. In the next section, we compare the analytical and numerical Lipschitz constants and utilize them for performing DSE using the aforementioned Lipschitz-based observer.

		\begin{figure*}[t]
		\centering
		\includegraphics[page=1,scale=0.94]{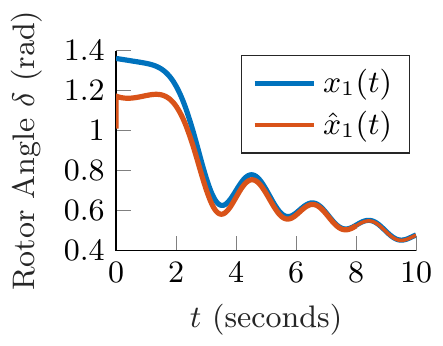}		\includegraphics[page=2,scale=0.94]{standalone_fig}		
		\includegraphics[page=3,scale=0.94]{standalone_fig}							\includegraphics[page=4,scale=0.94]{standalone_fig}	
		\caption{System's and observer's trajectories considering $\gamma_f = \gamma_f^{(\mathrm{analytical})}$ for the four generator states ($\delta, \omega, e_q', e_d'$). Similar state estimation results are obtained when using numerical Lipschitz constants after obtaining the corresponding observer gain $\m L$.}
		\label{Fig:analytical}	
	\end{figure*}
	
	\section{Numerical Simulations}~\label{sec:simulations}
	This section investigates the property and characteristic of the proposed analytical and numerical methods to determine the corresponding Lipschitz constants for $\m f(\cdot)$ and $\m h(\cdot)$. First, we compare the values of Lipschitz constants obtained from using both methods and second, explore the impact of the \textit{potentially} conservative analytical Lipschitz constants on the design of asymptotic observers for the nonlinear generator with PMU measurement models~\eqref{eq:state_space_gen}, with the objective of performing DSE.
	
	
	\subsection{Power System Parameters and Setup} We test the proposed approaches on the 16-machine, 68-bus system that is extracted from the PST toolbox \cite{chow1992toolbox} by considering Generator 16 in the  network. The parameters are obtained from \cite{chow1992toolbox}.
	The input vector $\boldsymbol{u}$, including $T_{\mathrm{m}i}$, $E_{\mathrm{fd}i}$, $i_{\mathrm{R}i}$, and $i_{\mathrm{I}i}$ are obtained from simulations of the whole system in which each generator is using a transient model with IEEE Type DC1 excitation system and
	a simplified turbine-governor system \cite{qi2018comparing}. To obtain lower and upper bounds on the state $\boldsymbol{x}$ and input $\boldsymbol{u}$, their minima and maxima are measured. 
	All simulations are conducted by using 
	MATLAB R2016b running on a 64-bit Windows 10 with 3.4GHz Intel\textsuperscript{R} Core\textsuperscript{TM} i7-6700 CPU and 16 GB of RAM. 
	We use YALMIP \cite{Lofberg2004} as the interface and MOSEK \cite{andersen2000mosek} solver to get the solutions of the LMI required by the DSE and observer.


	
	\subsection{Lipschitz Constants Computation}
	This section is devoted to determine Lipschitz constants 
	of $\m f(\cdot)$ and $\m h(\cdot)$ given the generator parameters and operational range. First, we compute the analytical Lipschitz constants $\gamma_f^{(\mathrm{analytical})}$ and $\gamma_h^{(\mathrm{analytical})}$ by using applying \eqref{eq:lipschitz_constants_f} and \eqref{eq:lipschitz_constants_h} from Theorem~\ref{thm:lipschitz_constant}. Second, we implement Algorithm \ref{algoLC} to compute numerical approximations of Lipschitz constants of $\m f(\cdot)$ and $\m h(\cdot)$. For these approximations, we utilize three different methods to generate the sampled points: random, Sobol, and Halton sequences. By generating $2000$ sample points inside the sets $\mathbfcal{X}$ and $\mathbfcal{U}$, we run the algorithm ten times to minimize the effect of randomization. The corresponding MATLAB functions used to generate these points are \texttt{rand}, \texttt{sobolset}, and \texttt{haltonset}. 
	The results are given in Table \ref{result}, where the mean values of the approximated Lipschitz constants $\gamma_{f,h}^{(\mathrm{numerical})}$ are compared with the analytical Lipschitz constants. 
	
	\begin{table}[t]
			\renewcommand{\arraystretch}{1.3}
		\caption{Analytical versus Numerical Lispchitz Constants.}
		\label{result}
		\centering
		\begin{tabular}{ccccc}
			\hline
			Constant & Analytical & Random & Sobol & Halton\\
			\hline
			\hline
			$\gamma_f$ & 715.395 & 19.802 & 20.128 & 20.131 \\
			\hline
			$\gamma_h$ & 5.390 & 1.631 & 1.629 & 1.630 \\
			\hline
		\end{tabular}
	\end{table}
	
	From this table, we observe that the analytical Lipschitz constants are much higher than the numerical ones, especially for $\gamma_f$. This is the case because the analytical Lipschitz constants given in  
	\eqref{eq:lipschitz_constants} do not necessarily give the best ones, and thus serve as upper bounds for the numerical Lipschitz constants. We found that the high values are also due to the large operational range of the fourth control input ($i_{I}$) which significantly increases $\gamma_f^{\mathrm{(analytical)}}$. 
	Amid this discrepancy, $\gamma_f^{(\mathrm{analytical})}$ and $\gamma_f^{(\mathrm{numerical})}$ are tested in the next section for performing DSE on a single generator.   Specifically, we investigate whether these conservative analytical constants can be useful to perform DSE. 
	We also observe that using LDS here did not have a drastic impact on the computation of the numerical Lipschitz constants---when compared with random sampling inside $\mathbfcal{X}$ and $\mathbfcal{U}$. To the best of our knowledge, one important feature of LDS for this particular purpose is that the approximated Lipschitz constants will converge to the actual ones as the number of sample point increases (assuming that $\m f(\cdot)$ and $\m h(\cdot)$ have continuous partial derivatives), which may or may not be the case for random sampling.
	
	
	\subsection{Generator DSE Using Lipschitz-Based Observer} 
	In this simulation, we compare two different scenarios where the first one uses $\gamma_f^{(\mathrm{analytical})}=715.395$ whereas the other uses $\gamma_f^{(\mathrm{numerical})}=20.131$, which is the result from particularly using Halton sequence from Table \ref{result}. Note that when simulating the DSE method through the observer~\eqref{eq:nonlinear_observer_dynamics}, the nonlinear model of the output equation for both system and observer are used, i.e., $\m y=\m h(\m x, \m u)$ and $\hat{\m y}=\m h(\hat{\m x}, \m u)$.
	
	Fig. \ref{Fig:analytical} depicts the state estimation trajectories in comparison with the system's trajectories given the analytical Lipschitz constant. Note that we have used significantly different initial conditions $\hat{\m x}(0)$ for the observer, in comparison with the generator's actual initial conditions (this can also be seen from Fig.~\ref{Fig:analytical}). The simulation using the numerical Lipschitz constant exhibits very similar results.
	This implies that---for this specific test at least---both analytical and numerical Lipschitz constants can be utilized for performing DSE via Lipschitz-based nonlinear observers, and while the analytical Lipschitz constant was in fact large, it hinders neither finding a feasible solution for the LMI~\eqref{eq:lmi_rajamani} nor obtaining asymptotically stable estimation error.  
	
	\section{Summary, Closing Remarks, and Future Work}
	Motivated by the need to study higher-order nonlinear, dynamic models of power networks, this paper deals with the problem of determining the Lipschitz constants for fourth order generator dynamics with PMU measurements, which leads to the investigation of different methods to compute Lipschitz constants: analytical formulation and numerical algorithm based on low discrepancy sampling methods. Numerical tests showcase the discrepancy between the analytical and numerical methods, and applications to DSE of generator states given PMU measurements are provided.
	
	We conclude the paper with the following remarks. \textit{(a)} Albeit conservative, Theorem~\ref{thm:lipschitz_constant} and the analytical Lipschitz constants give confidence in applying Lipschitz-based estimators---and potentially state-feedback controllers for the nonlinear power network. \textit{(b)} Although it is worried that large Lipschitz constants can impede the application of Lipschitz-based observers \cite{abbaszadeh2010}, we found that this may not always be the case, at least for  performing DSE on a single generator. \textit{(c)} Using LDS, in comparison with random sequences, does not seem to highly impact the values of numerical Lipschitz constants. The above observations \textit{(a)}--\textit{(c)} are, however, not thoroughly conclusive. Future work will focus on performing extensive numerical tests for various generators and operating conditions, as well as designing robust observers that consider nonlinear PMU measurement model under uncertainty.
	


	
	
	\section*{Acknowledgments}
	We gratefully acknowledge the constructive comments from the editor and the reviewers. We also acknowledge the financial support from National Science Foundation through Grant CMMI-DCSD-1728629.

	\bibliographystyle{IEEEtran}	\bibliography{bibl}

\begin{thebibliography}{10}
\providecommand{\url}[1]{#1}
\csname url@samestyle\endcsname
\providecommand{\newblock}{\relax}
\providecommand{\bibinfo}[2]{#2}
\providecommand{\BIBentrySTDinterwordspacing}{\spaceskip=0pt\relax}
\providecommand{\BIBentryALTinterwordstretchfactor}{4}
\providecommand{\BIBentryALTinterwordspacing}{\spaceskip=\fontdimen2\font plus
\BIBentryALTinterwordstretchfactor\fontdimen3\font minus
  \fontdimen4\font\relax}
\providecommand{\BIBforeignlanguage}[2]{{%
\expandafter\ifx\csname l@#1\endcsname\relax
\typeout{** WARNING: IEEEtran.bst: No hyphenation pattern has been}%
\typeout{** loaded for the language `#1'. Using the pattern for}%
\typeout{** the default language instead.}%
\else
\language=\csname l@#1\endcsname
\fi
#2}}
\providecommand{\BIBdecl}{\relax}
\BIBdecl

\bibitem{kundur1994power}
P.~Kundur, N.~J. Balu, and M.~G. Lauby, \emph{Power system stability and
  control}.\hskip 1em plus 0.5em minus 0.4em\relax McGraw-hill New York, 1994,
  vol.~7.

\bibitem{sauer2017power}
P.~W. Sauer, M.~Pai, and J.~H. Chow, \emph{Power system dynamics and stability:
  with synchrophasor measurement and power system toolbox}.\hskip 1em plus
  0.5em minus 0.4em\relax John Wiley \& Sons, 2017.

\bibitem{qi2018comparing}
J.~{Qi}, A.~F. {Taha}, and J.~{Wang}, ``Comparing kalman filters and observers
  for power system dynamic state estimation with model uncertainty and
  malicious cyber attacks,'' \emph{IEEE Access}, vol.~6, pp. 77\,155--77\,168,
  2018.

\bibitem{ghahremani2011dynamic}
E.~Ghahremani and I.~Kamwa, ``Dynamic state estimation in power system by
  applying the extended kalman filter with unknown inputs to phasor
  measurements,'' \emph{IEEE Transactions on Power Systems}, vol.~26, no.~4,
  pp. 2556--2566, Nov. 2011.

\bibitem{zhao1}
J.~Zhao, M.~Netto, and L.~Mili, ``A robust iterated extended kalman filter for
  power system dynamic state estimation,'' \emph{IEEE Trans. Power Syst.},
  vol.~32, no.~4, pp. 3205--3216, July 2017.

\bibitem{zhou2013estimation}
N.~Zhou, D.~Meng, and S.~Lu, ``Estimation of the dynamic states of synchronous
  machines using an extended particle filter,'' \emph{IEEE Transactions on
  Power Systems}, vol.~28, no.~4, pp. 4152--4161, Nov. 2013.

\bibitem{qi2018dynamic}
J.~Qi, K.~Sun, J.~Wang, and H.~Liu, ``Dynamic state estimation for
  multi-machine power system by unscented kalman filter with enhanced numerical
  stability,'' \emph{IEEE Transactions on Smart Grid}, vol.~9, no.~2, pp.
  1184--1196, Mar. 2018.

\bibitem{taha2018risk}
A.~F. Taha, J.~Qi, J.~Wang, and J.~H. Panchal, ``Risk mitigation for dynamic
  state estimation against cyber attacks and unknown inputs,'' \emph{IEEE
  Trans. Smart Grid}, vol.~9, no.~2, pp. 886--899, Mar. 2018.

\bibitem{taha2018robust}
A.~F. Taha, M.~Bazrafshan, S.~Nugroho, N.~Gatsis, and J.~Qi, ``Robust control
  architectures for renewable-integrated power networks using convex
  approximations,'' \emph{arXiv preprint arXiv:1802.09071}, 2018.

\bibitem{Khatibi2018}
M.~{Khatibi} and S.~{Ahmed}, ``Optimal resilient defense strategy against false
  data injection attacks on power system state estimation,'' in \emph{2018 IEEE
  Power Energy Society Innovative Smart Grid Technologies Conference (ISGT)},
  Feb 2018, pp. 1--5.

\bibitem{abbaszadeh2010}
M.~Abbaszadeh and H.~J. Marquez, ``Nonlinear observer design for one-sided
  lipschitz systems,'' in \emph{American Control Conference (ACC), 2010}.\hskip
  1em plus 0.5em minus 0.4em\relax IEEE, 2010, pp. 5284--5289.

\bibitem{phanomchoeng2010}
G.~Phanomchoeng and R.~Rajamani, ``Observer design for lipschitz nonlinear
  systems using riccati equations,'' in \emph{American Control Conference
  (ACC), 2010}.\hskip 1em plus 0.5em minus 0.4em\relax IEEE, 2010, pp.
  6060--6065.

\bibitem{Jin2018}
\BIBentryALTinterwordspacing
M.~Jin, H.~Feng, and J.~Lavaei, ``Multiplier-based observer design for
  large-scale {L}ipschitz systems,'' 2018. [Online]. Available:
  \url{http://www.jinming.tech/papers/Observer_2018_1.pdf}
\BIBentrySTDinterwordspacing

\bibitem{siljak2002robust}
D.~D. Siljak, D.~M. Stipanovic, and A.~I. Zecevic, ``Robust decentralized
  turbine/governor control using linear matrix inequalities,'' \emph{IEEE
  Transactions on Power Systems}, vol.~17, no.~3, pp. 715--722, 2002.

\bibitem{lian2017decentralized}
J.~Lian, S.~Wang, R.~Diao, and Z.~Huang, ``Decentralized robust control for
  damping inter-area oscillations in power systems,'' \emph{arXiv preprint
  arXiv:1701.02036}, 2017.

\bibitem{dalal2008}
I.~L. Dalal, D.~Stefan, and J.~Harwayne-Gidansky, ``Low discrepancy sequences
  for monte carlo simulations on reconfigurable platforms,'' in
  \emph{Application-Specific Systems, Architectures and Processors, 2008. ASAP
  2008. International Conference on}.\hskip 1em plus 0.5em minus 0.4em\relax
  IEEE, 2008, pp. 108--113.

\bibitem{chakrabarty2017}
A.~{Chakrabarty}, V.~{Dinh}, M.~J. {Corless}, A.~E. {Rundell}, S.~H. \.{Z}ak,
  and G.~T. {Buzzard}, ``Support vector machine informed explicit nonlinear
  model predictive control using low-discrepancy sequences,'' \emph{IEEE
  Transactions on Automatic Control}, vol.~62, no.~1, pp. 135--148, Jan 2017.

\bibitem{cheng2000computational}
J.~Cheng and M.~J. Druzdzel, ``Computational investigation of low-discrepancy
  sequences in simulation algorithms for bayesian networks,'' in
  \emph{Proceedings of the Sixteenth conference on Uncertainty in artificial
  intelligence}.\hskip 1em plus 0.5em minus 0.4em\relax Morgan Kaufmann
  Publishers Inc., 2000, pp. 72--81.

\bibitem{chow1992toolbox}
J.~H. Chow and K.~W. Cheung, ``A toolbox for power system dynamics and control
  engineering education and research,'' \emph{Power Systems, IEEE Transactions
  on}, vol.~7, no.~4, pp. 1559--1564, 1992.

\bibitem{Lofberg2004}
J.~L\"{o}fberg, ``Yalmip: A toolbox for modeling and optimization in matlab,''
  in \emph{Proc. IEEE Int. Symp. Computer Aided Control Systems Design}.\hskip
  1em plus 0.5em minus 0.4em\relax IEEE, 2004, pp. 284--289.

\bibitem{andersen2000mosek}
E.~D. Andersen and K.~D. Andersen, ``The mosek interior point optimizer for
  linear programming: an implementation of the homogeneous algorithm,'' in
  \emph{High performance optimization}.\hskip 1em plus 0.5em minus 0.4em\relax
  Springer, 2000, pp. 197--232.

\end{thebibliography}
	
	\appendices
	
	\section{Single Generator State-Space Parameters}~\label{app:params}
	Constants $\alpha_{1,\ldots, 10}$ and $\beta_{1,2}$ are given as: \begin{subequations} \label{para}
		\begin{align*}
		&\alpha_1=\omega_0,\;\alpha_2=\frac{\omega_0}{2H_i},\;\alpha_3=\frac{\omega_0}{2H_i}\left(\frac{S_\textrm{B}}{S_{\textrm{N}i}}\right) \\
		&\alpha_4=\frac{\omega_0}{2H_i}\left(\frac{S_\textrm{B}}{S_{\textrm{N}i}}\right)^2(x'_{\textrm{q}i}-x'_{\textrm{d}i}),\;\alpha_5=\frac{K_{\textrm{D}i}}{2H_i},\;\alpha_6=\frac{K_{\textrm{D}i}}{2H_i}\omega_0\\
		&\alpha_7=\frac{1}{T'_{\textrm{d0}i}},\;\alpha_8=\frac{1}{T'_{\textrm{d0}i}}\left(\frac{S_\textrm{B}}{S_{\textrm{N}i}}\right)(x_{\textrm{d}i}-x'_{\textrm{d}i}) ,\;\alpha_9=\frac{1}{T'_{\textrm{q0}i}} \\
		&\alpha_{10}=\frac{1}{T'_{\textrm{q0}i}}\left(\frac{S_\textrm{B}}{S_{\textrm{N}i}}\right)(x_{\textrm{q}i}-x'_{\textrm{q}i}),\;\beta_1=\frac{1}{2}\left(\frac{S_\textrm{B}}{S_{\textrm{N}i}}\right)(x'_{\textrm{q}i}-x'_{\textrm{d}i}) \\
		&\beta_2=\frac{1}{2}\left(\frac{S_\textrm{B}}{S_{\textrm{N}i}}\right)(x'_{\textrm{q}i}+x'_{\textrm{d}i}).
		\end{align*}
	\end{subequations}
	The state-space matrices $\m A, \m B_u,$ and $\m D_u$ are given as
	\begin{subequations} \label{para}
		\begin{align*}
		\m A = &\bmat{0&1&0&0\\0&-\alpha_5&0&0\\0&0&-\alpha_7&0\\0&0&0&-\alpha_9},\;
		\m{B_\mathrm{u}} = \bmat{0&0&0&0\\\alpha_2&0&0&0\\0&\alpha_7&0&0\\0&0&0&0} \\
		\m{D_\mathrm{u}} = &\bmat{0&0&0&\beta_2\\0&0&-\beta_2&0}.\;
		\end{align*}
	\end{subequations}
	
	
	%

\end{document}